\documentclass[a4paper,UKenglish,cleveref, autoref, thm-restate]{lipics-v2021}
\hideLIPIcs  %uncomment to remove references to LIPIcs series (logo, DOI, ...), e.g. when preparing a pre-final version to be uploaded to arXiv or another public repository

\usepackage{comment} % enables the use of multi-line comments (\ifx \fi)
\usepackage{amsmath,amssymb,amsthm}  % assumes amsmath package installed
\usepackage{graphicx}
\usepackage{caption, subcaption}
\usepackage{verbatim}
\usepackage{float}
\usepackage{xcolor}
\usepackage{mdframed}
\usepackage{indentfirst}
\usepackage[T1]{fontenc}
\usepackage[utf8]{inputenc}
\renewcommand{\qed}{\quad\qedsymbol}
\newcommand{\R}{\mathbb{R}}

\newcommand{\T}{\mathcal{T}}
\newcommand{\MST}{\mathrm{MST}}
\title{Euclidean Steiner Shallow-Light Trees in Higher Dimensions} %TODO Please add

\author{Devin Frost}{California State University Northridge, Los Angeles, CA, USA}{devin.frost.481@my.csun.edu}{}{}%TODO mandatory, please use full name; only 1 author per \author macro; first two parameters are mandatory, other parameters can be empty. Please provide at least the name of the affiliation and the country. The full address is optional. Use additional curly braces to indicate the correct name splitting when the last name consists of multiple name parts.

\author{Kimberly Kokado}{California State University Northridge, Los Angeles, CA, USA}{kimberly.kokado.43@my.csun.edu}{}{}

\author{Csaba D. T\'oth}{California State University Northridge, Los Angeles, CA, USA}{csaba.toth@csun.edu}{}{}

\authorrunning{D.~Frost, K.~Kokado, and C.~D.~T\'oth} %TODO mandatory. First: Use abbreviated first/middle names. Second (only in severe cases): Use first author plus 'et al.'

\Copyright{Devin Frost and Kimberly Kokado} %TODO mandatory, please use full first names. LIPIcs license is "CC-BY";  http://creativecommons.org/licenses/by/3.0/

\ccsdesc[500]{Mathematics of computing~Trees}
\ccsdesc[500]{Mathematics of computing~Paths and connectivity problems}
\keywords{shallow-light tree, Steiner point, lightness, stretch} %TODO mandatory; please add comma-separated list of keywords

\category{} %optional, e.g. invited paper

\relatedversion{} %optional, e.g. full version hosted on arXiv, HAL, or other respository/website
\nolinenumbers %uncomment to disable line numbering

\EventEditors{John Q. Open and Joan R. Access}
\EventNoEds{2}
\EventLongTitle{42nd Conference on Very Important Topics (CVIT 2016)}
\EventShortTitle{CVIT 2016}
\EventAcronym{CVIT}
\EventYear{2016}
\EventDate{December 24--27, 2016}
\EventLocation{Little Whinging, United Kingdom}
\EventLogo{}
\SeriesVolume{42}
\ArticleNo{23}
\begin{document}

\maketitle

%TODO mandatory: add short abstract of the document
\begin{abstract}
 This paper proves a conjecture by Solomon about Steiner shallow-light trees (SLT) in Euclidean $d$-space: It is shown that for any finite point set $\mathbb{R}^d$, any root, and any $\epsilon>0$, there is a Euclidean Steiner $(1+\epsilon,O(\sqrt{1/\epsilon}))$-SLT without any dependence on dimension. We also revisit the core example, designed by Solomon, in the plane and its generalization to $d$-space.
\end{abstract}

\section{Introduction}
\label{sec:itro}

For an edge-weighted graph $G=(V,E,w)$, shallow-light trees (SLT) were introduced in the 1990s~\cite{AwerbuchBP90,CongKRSW91,KRY95} to  simultaneously approximate  shortest path trees (SPT) and minimum spanning trees (MST). For a root $s\in V$, an $(\alpha,\beta)$-SLT
is a spanning tree $T$ of $G$ with \emph{root stretch} at most $\alpha$ (that is, for every vertex $v\in V$, we have $d_T(s,v)\leq \alpha \cdot d_G(s,v)$) and \emph{lightness} at most $\beta$ (that is, $w(T)\leq \beta\cdot w(\MST(G))$. It is known that for every $\epsilon>0$, one can construct a $(1+\epsilon,O(1/\epsilon))$-SLT in linear time~\cite{AwerbuchBP90,CongKRSW91,KRY95}. The trade-off between root stretch and lightness is the best possible up to constant factors, and is already tight in planar graphs~\cite{KRY95}. Elkin and Solomon~\cite{ElkinS15} showed that the trade-off is the best possible over Euclidean metrics in the plane, as well, where $G$ is a complete graph on $n$ points in $\mathbb{R}^2$, and the edge weights are the Euclidean distances between the end points. Solomon~\cite{Solomon15} showed that the use of Steiner points can dramatically improve this bound and constructed, for any finite point set $V\subset \mathbb{R}^2$, source $s\in V$ and $\epsilon>0$, a Steiner $(1+\epsilon,O(\sqrt{1/\epsilon}))$-SLT. This trade-off is also the best possible: It is tight for a set $V$ of $\Theta(\sqrt{1/\epsilon})$ evenly spaced points on a circle, and any root $s\in V$. Very recently, Hung et al.~\cite{LST26} presented a bicriteria approximation algorithm for finding instance-optimal SLTs and Steiner SLTs for a given set of $n$ points and root in Euclidean plane: by relaxing the root stretch to $1+O(\epsilon \log \frac{1}{\epsilon})$, they can approximate the minimum weight of $(1+\epsilon,O(1/\epsilon))$-SLT within a factor of $O(\log^2\frac{1}{\epsilon})$.

Steiner SLTs with root stretch $1+\epsilon$ were the key building block to construct
Euclidean Steiner $(1+\epsilon)$-spanners for $n$ points in the plane with optimal lightness $O(1/\epsilon)$~\cite{bhore2022euclidean}. However, in dimensions $d\geq 3$, there is still a gap between the lower bound $\Omega(1/\epsilon^{d/2})$ and the upper bound $\tilde{O}(1/\epsilon^{(d+1)/2})$ for the minimum lightness of Euclidean Steiner $(1+\epsilon)$-spanners~\cite{bhore2022euclidean,LeS23}. Solomon~\cite{Solomon14} sketched a higher dimensional generalization of his SLT construction, which would yield an $(1+\epsilon,O(1/\epsilon)\cdot 2^d)$-SLT in Euclidean $d$-space, however he conjectured
that the dependence on the dimension $d$ is not needed.

\smallskip\noindent\textbf{Results.} The main result of this paper (\Cref{thm:folding} in \Cref{sec:folding}) proves Solomon's conjecture. We construct, for every finite set in $\mathbb{R}^d$ in any dimension $d\geq 2$, a Euclidean Steiner $(1+\epsilon,O(\sqrt{1/\epsilon}))$-SLT without any dependence on $d$: We unfold a 2-dimensional surface generated by the Euclidean MST into the plane, construct a Steiner SLT in the plane, and then lift the tree back into $\mathbb{R}^d$ to obtain a Steiner SLT for the given point set.
In Section~\ref{sec:Core}, we  revisit Solomon's construction in the plane, provide additional details (that were missing in~\cite{Solomon14,Solomon15}), and adjust key parameters of the construction (as the lightness and root stretch analysis do not go through with the parameters originally proposed in~\cite{Solomon14,Solomon15}). In Section~\ref{sec:pyramid}, we consider the straightforward generalization of Solomon's core example, as outlined in~\cite[Section~3]{Solomon14}, and confirm the dependence on the dimension $d$ in this case.

\section{Folding Analysis in $d$-Dimensions}
\label{sec:folding}

In this section, we construct a Steiner SLT for point set in $\mathbb{R}^d$, for any dimension $d$, with the same bounds as for $d=2$ in Solomon~\cite{Solomon15}. Our main result is the following theorem.

\begin{theorem}\label{thm:folding}
    Let $V$ be a finite set in $\mathbb{R}^d$, and $s\in V$ a distinguished source point. Then we can construct a Steiner $(1+\epsilon,O(\sqrt{1/\epsilon}))$-SLT for $V$ rooted at $s$.
\end{theorem}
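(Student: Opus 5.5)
We will reduce the $d$-dimensional problem to the planar result of Solomon~\cite{Solomon15}, as indicated in the introduction. We treat Solomon's planar theorem as a black box: for every finite $P\subset\R^2$, root $p\in P$ and $\epsilon>0$ there is a Steiner $(1+\epsilon,O(\sqrt{1/\epsilon}))$-SLT. Since the bound is stated relative to the MST, and Steiner points are free, the natural planar object to feed into it is the unfolded MST.

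\textbf{Step 1 (a developable surface through the MST).} Let $M=\MST(V)$ and root it at $s$. For each vertex $v\neq s$ consider the triangle $\Delta_v=\operatorname{conv}\{s,\mathrm{parent}(v),v\}$. The union $\Sigma=\bigcup_v\Delta_v$ is a 2-dimensional polyhedral complex in $\R^d$. Its cells are glued along segments $s\,\mathrm{parent}(v)$, and each of them contains the ray from $s$ to the corresponding vertex.

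Cut $\Sigma$ open and develop it into the plane by an unfolding map $\phi$. We go around the tree in DFS (Euler tour) order and place the triangles consecutively around the image of $s$, so that the angles at $s$ are preserved and each developed triangle is congruent to $\Delta_v$.

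The trouble is that the total angle at $s$ may exceed $2\pi$, so the development may overlap itself. To avoid this we develop onto a branched cover of the plane, that is, a cone surface with apex $s$ and arbitrary cone angle. Equivalently, we split the sequence of triangles into angular sectors of angle less than $\pi$ each and handle each sector separately in the plane. Within one sector the development is an isometric embedding of a star-shaped region around $s$.

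\textbf{Step 2 (planar construction and lifting).} In each planar sector, the developed images $\phi(V)$ lie in a star-shaped polygon around $\phi(s)$. Three properties of the development matter:
\begin{enumerate}
\item $|\phi(s)\phi(v)|=|sv|$, since $\phi$ is an isometry on each $\Delta_v$ and the segment $sv$ lies in $\Delta_v$.
\item The developed edges $\phi(\mathrm{parent}(v))\phi(v)$ form a planar tree of weight $w(M)$, which is an upper bound on the planar MST of $\phi(V)$.
\item Every planar segment inside a sector lies in the developed surface, and the inverse map $\phi^{-1}$ is 1-Lipschitz along $\Sigma$, because it is locally an isometry and geodesics of $\Sigma$ are at least as long as chords in $\R^d$.
\end{enumerate}

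Apply Solomon's planar SLT to $\phi(V)$ in each sector. This tree must use only paths staying inside the sector (star-shapedness lets us restrict it there, possibly at a constant-factor cost). Then map it back by $\phi^{-1}$. Lengths are preserved, so the root stretch is $(1+\epsilon)|\phi(s)\phi(v)|=(1+\epsilon)|sv|$. The weight is $O(\sqrt{1/\epsilon})$ times the developed MST weight of that sector. Summing over sectors, and noting that each MST edge is used in $O(1)$ sectors, gives total weight $O(\sqrt{1/\epsilon})\,w(M)$. Finally, we join the sector trees at the common root $s$.

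\textbf{Main obstacle.} The delicate step is that Solomon's planar SLT is built for arbitrary planar point sets and uses Steiner points and segments anywhere in the plane. Its edges may therefore leave the developed image of $\Sigma$, where $\phi^{-1}$ is undefined.

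To handle this, we would first try to make each sector convex. We would choose sectors of angle at most $\pi/2$ and fill them by the full angular wedge at $s$, adding to $\Sigma$ the cones spanned by $s$ and the developed MST paths; these are still developable in $\R^d$. Alternatively, we would open up Solomon's construction and check that all of its paths are monotone within angular wedges around the root, so that they stay inside the star-shaped region. Verifying one of these two options, together with the claim that the sector decomposition does not inflate the MST weight by more than a constant, is where the real work lies.
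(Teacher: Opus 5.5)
There is a genuine gap. Step~2 rests on two claims that your last paragraph leaves open, and the justifications you give for them in the body do not hold.

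\textbf{First, the domain.} With the bounded triangles $\Delta_v$, the development is star-shaped but in general a non-convex fan. A planar tree cannot be confined to it at constant-factor cost. Suppose a developed boundary vertex $\phi(w)$ lies at distance $\delta$ from $\phi(s)$, angularly between two developed points at distance $R\gg\delta$ and angular separation $\theta$. Every curve inside the fan joining these two points crosses the segment $\phi(s)\phi(w)$, so it has length at least $2(R-\delta)$, against $2R\sin(\theta/2)$ in the plane.

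Your ``option~1'' is the right repair, and it is exactly the paper's surface. Take the unbounded wedges $C_j=\operatorname{conv}(\overrightarrow{sh_j}\cup\overrightarrow{sh_{j+1}})$ over consecutive points of the DFS path, and group them into blocks of total angle less than $\pi$. Each block then develops onto a \emph{convex} planar wedge $W$. The inverse map sends every segment of $W$ to a polygonal path in $\R^d$ of the same length (\Cref{lem:isometry}).

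Once $W$ is convex, you need not open Solomon's construction at all. Write $V_i$ for the points of block $i$. Nearest-point projection onto $W$ is $1$-Lipschitz and fixes $\phi(V_i)\cup\{\phi(s)\}$. So projecting the vertices of the planar SLT into $W$ and redrawing its edges as segments increases neither the weight nor any root-path length. The paper instead opens the box: it uses blocks of angle $O(\sqrt{\epsilon})$ (\Cref{lem:angle}) and builds the core tree of \Cref{thm:O1-SLT} inside each thin wedge.

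\textbf{Second, the lightness bookkeeping.} Every block's planar instance contains $\phi(s)$, so block $i$ pays at least $\min_{v\in V_i}|sv|$ beyond its share of the tour. An upper bound on block angles alone permits arbitrarily many blocks. ``Each MST edge lies in $O(1)$ sectors'' does not pay for these root connections; your item~2 bounds the developed tour, not the MST of the block together with $\phi(s)$. You need a cutting rule with a lower bound.

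The paper's rule $d_H(b_i,b_{i+1})=\sqrt{\epsilon}\,d(s,b_{i+1})$ gives $\sum_i d(s,b_{i+1})\le w(H)/\sqrt{\epsilon}$. This is where its factor $\sqrt{1/\epsilon}$ comes from.

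In your black-box version, subdivide so that every wedge has angle at most a constant $c\le\pi/2$, and close a block as soon as its angle reaches $c$. For such wedges, $\angle h_jsh_{j+1}\le\frac{\pi}{2}|h_jh_{j+1}|/|sh_j|$. Hence a closed block has tour length $L_i\ge\frac{2c}{\pi}\min_j|sh_j|$, and by the triangle inequality $\min_{v\in V_i}|sv|=O(L_i)$. The last block costs at most $w(H)$. This gives $\sum_i w(\MST(\phi(V_i)\cup\{\phi(s)\}))=O(w(M))$. So the union of the lifted trees, followed by an SPT, has weight $O(\sqrt{1/\epsilon})\,w(M)$.
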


\subparagraph{The Surface.}
Let $M$ be an MST of $V$, rooted at $s\in V$.
Let $H=(s=v_1,\ldots , v_n)$ be the Hamiltonian path that visits the vertices in
in the order in which they are discovered by the DFS traversal of $M$ starting from $s$. It is well known that $w(H)\leq 2\cdot w(M)$. We consider $H$ a geometric path, where each edge is a line segment between its endpoints.
We select a set of (Steiner) points $B=\{b_1,\ldots ,b_k\}$, called \emph{break points}, at vertices or in the interior of edges of $H$ as follows. The first break point $b_1$ is $v_1$ and the final break point $b_k$ is $v_n$. Then, if $b_i$ is already defined, then we choose the next  break point $b_{i+1}$ as a Steiner point on $H$ such that
\[
d_H(b_i,b_{i+1})=\sqrt{\epsilon}\cdot d(s,b_{i+1}),
\]
where $d_H(u,v)$ is the distance between the points $u$ and $v$ along $H$. Subdivide the edges of $H$ with the break points, if necessary, and denote the resulting (Steiner) path by $H^*=(h_1,\ldots , h_{n+k})$.

For each edge $h_j h_{j+1}$ of $H^*$, define the 2-dimensional cone $C_j$ as the convex hull of the rays $\overrightarrow{sh_j}$ and $\overrightarrow{sh_{j+1}}$.
For each subpath of $H^*$ between consecutive breakpoints, we combine the cones corresponding to the edges into a piecewise-linear surface:
For $i=1,\ldots ,k$, we construct the surface $S_i$ by successively gluing together the cones $C_j$ corresponding to the edged $h_jh_{j+1}$ between breakpoints $b_i$ and $b_{i+1}$. Recall that cones $C_j$ and $C_{j+1}$ share $\overrightarrow{sh_{j+1}}$. So, each $S_i$ is composed of a sequence of cones $(C_{j(i)},C_{j(i)+1}\ldots,C_{j(i+1)})$, where consecutive cones are glued together at their shared boundary ray.

\subparagraph{Unfolding into the plane.}
Define the family of functions $f_i:S_i\to \R^2$ as follows, and assume for notational convenience that $S_i$ is composed of the cones $C_1,\ldots , C_\ell$. We define the function $f_i$ piecewise, as an isometry on each cone. Let $f_i$ map $C_1$ isometrically to the plane. When $f_i$ is already defined on a cone $C_j$ (hence it is defined on the ray $\overrightarrow{sh_{j+1}}$), we define $f_i$ on $C_{j+1}$ so that it maps $C_{j+1}$ isometrically into the plane, $f_{i+1}(p)=f_i(p)$ for all $p\in \overrightarrow{sh_{j+1}}$, and the cones $f_i(C_j)$ and $f_i(C_{j+1})$ are on opposite sides of $f(\overrightarrow{sh_{j+1}})$.
To ensure that $f_i$ is injective and $f_i(S_i)$ is convex, we need that the total angle of the resulting cone $f_i(S_i)$ is less than $\pi$.

\begin{lemma}
\label{lem:angle}
     For every $i=1,\ldots , k$, we have $\alpha_i\leq O(\sqrt{\epsilon})$, where  $\alpha_i$ is the angle of the cone $f_i(S_i)$, or equivalently, the sum of the angles $\alpha_j$ of each cone that makes up $f_i(S_i)$.
\end{lemma}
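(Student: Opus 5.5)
Fix $i$ and write $S_i$ as the union of cones $C_j$, one for each edge $h_jh_{j+1}$ of the subpath $P_i$ of $H^*$ from $b_i$ to $b_{i+1}$. Since $f_i$ is an isometry on each cone, the planar angle of $f_i(C_j)$ equals the spatial angle $\alpha_j=\angle h_j s h_{j+1}$. So $\alpha_i=\sum_j \alpha_j$. The plan is to bound each $\alpha_j$ by the length of its edge divided by the distance from $s$, and then use the defining equation of break points,
\[
d_H(b_i,b_{i+1})=\sqrt{\epsilon}\cdot d(s,b_{i+1}).
\]

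\textbf{Step 1 (angle of a single cone).} Let $r_j=\min\{d(s,p): p\in h_jh_{j+1}\}$. The angle subtended at $s$ by a segment is at most the length of the radial projection of that segment onto the unit sphere. The radial projection of a point $p$ onto the sphere of radius $r_j$ is $1$-Lipschitz on the region at distance at least $r_j$ from $s$. Therefore
\[
\alpha_j \le \frac{|h_jh_{j+1}|}{r_j} \le \frac{|h_jh_{j+1}|}{r}, \qquad r:=\min_{p\in P_i} d(s,p).
\]
Summing over the edges of $P_i$ gives $\alpha_i\le d_H(b_i,b_{i+1})/r$.

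\textbf{Step 2 (lower bound on $r$).} For every $p\in P_i$ we have $d(s,p)\ge d(s,b_{i+1})-d_H(p,b_{i+1})$. Hence
\[
d(s,p)\ge d(s,b_{i+1})-d_H(b_i,b_{i+1})=(1-\sqrt{\epsilon})\,d(s,b_{i+1}).
\]
For $\epsilon\le 1/4$ this gives $r\ge d(s,b_{i+1})/2$. Combining with Step 1,
\[
\alpha_i\le \frac{\sqrt{\epsilon}\,d(s,b_{i+1})}{d(s,b_{i+1})/2}=2\sqrt{\epsilon}.
\]
For larger $\epsilon$ we may replace $\epsilon$ by $\min\{\epsilon,1/4\}$ without affecting the asymptotic claim.

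\textbf{Step 3 (the last break point and degenerate cases).} The final piece ends at $b_k=v_n$, where equality in the break-point equation may fail. There we only have $d_H(b_{k-1},b_k)\le\sqrt{\epsilon}\,d(s,b_k)$, because otherwise another break point would have been inserted, and the same argument applies. I would also check two degenerate points. First, $s=v_1=b_1$ lies on $P_1$. But the rule gives $d_H(b_1,b_2)=\sqrt{\epsilon}\,d(s,b_2)$, so Step 2 already keeps all points of $P_1$ other than $b_1$ away from $s$. Cones are defined through rays from $s$, so the first cone near $s$ needs care. Its angle is still controlled by the Step 1 bound applied to the subsegment away from $s$, via a limiting argument. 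Second, one must confirm that the break points exist and are well defined, using continuity of $d_H(b_i,x)-\sqrt{\epsilon}\,d(s,x)$ along $H$.

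\textbf{Main obstacle.} I expect the hardest part to be the inequality of Step 1 together with this degenerate behaviour near $s$ in the first segment. The intended approach is to parametrize the segment, use $d\theta\le |dp|/d(s,p)$ for the angular speed, and integrate; this gives the inequality cleanly.
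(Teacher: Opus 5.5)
\textbf{Verdict.} Your Steps 1 and 2 are correct and follow the paper's route, but your Step 3 treatment of the first piece is wrong. That is exactly where the lemma, for the construction as written, breaks.

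\textbf{Steps 1 and 2.} The shared route is: bound each cone angle by edge length over distance to $s$, sum over the piece, and lower-bound the distance via $d(s,p)\ge d(s,b_{i+1})-d_H(p,b_{i+1})\ge(1-\sqrt{\epsilon})\,d(s,b_{i+1})$. Your per-cone bound $\alpha_j\le|h_jh_{j+1}|/r_j$ is valid: the radial map onto the sphere of radius $r_j$ is the nearest-point projection onto a convex ball, hence 1-Lipschitz. It is also cleaner than the paper's step. The paper uses the law of sines to get $\sin\alpha_j\le a_j/d(s,p_{j-1})$ and then writes $\alpha_j\le\tfrac{1}{2}\sin\alpha_j$, which is backwards. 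The valid conversion $\alpha_j\le\tfrac{\pi}{2}\sin\alpha_j$ needs $\alpha_j\le\pi/2$, which the paper never checks. Your inequality holds for every angle.

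\textbf{The first piece.} With $b_1=s$, no point $b_2$ can satisfy the break-point rule. Indeed, $d_H(s,x)\ge d(s,x)>\sqrt{\epsilon}\,d(s,x)$ for every $x\ne s$ on $H$ when $\epsilon<1$.
\begin{itemize}
\item Your Step 2 makes no exception for $b_1$. At $p=b_1=s$ it reads $0\ge(1-\sqrt{\epsilon})\,d(s,b_2)>0$, which is this same contradiction. So the claim that Step 2 keeps the points of $P_1$ other than $b_1$ away from $s$ rests on an impossible premise. Excluding $b_1$ does not help either, since the bound would also have to hold for points of the first edge arbitrarily close to $s$.
\item Now take the reading you use for the last piece: the next break point is a zero of $g(x)=d_H(b_i,x)-\sqrt{\epsilon}\,d(s,x)$, and otherwise $v_n$. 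Then $g>0$ on $H$ beyond $b_1=s$, so $b_2=v_n$ and $P_1=H$. The unfolded angle can then be arbitrarily large, for example if $H$ spirals around $s$.
\item Your last-piece argument silently needs $g(b_{k-1})<0$, i.e., $b_{k-1}\ne s$.
\item The ``limiting argument'' addresses the wrong thing. The edge $sh_2$ lies on one ray, so its cone has angle $0$. What fails is the existence of $b_2$.
\end{itemize}

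\textbf{How to repair it.} The paper's proof has the same hole: its bound $d(s,p_0)\ge(1-\sqrt{\epsilon})\,d(s,b_2)$ is false for $i=1$. So the construction itself must be adjusted. For example, make the edge $sv_2$ a piece of its own (angle $0$) and start the rule at $b_2=v_2$. Then $g(b_i)=-\sqrt{\epsilon}\,d(s,b_i)<0$ for all $i\ge2$. Each next break point then exists by continuity, or the path ends with $g(v_n)\le0$. With that change, your Steps 1 and 2 apply verbatim to every piece.
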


%The proof of \Cref{lem:angle,lem:isometry}, as well as all further details, are deferred to the appendix. Once unfolded, we construct a Steiner $(1+\epsilon,O(\sqrt{1/\epsilon}))$-SLT in the plane using Solomon's planar construction, and then fold the tree back into $\mathbb{R}^d$ to obtain a Steiner $(1+\epsilon,O(\sqrt{1/\epsilon}))$-SLT for $V$, without any dependence on the dimension $d$.

\begin{proof}
Let $i\in \{1,\ldots , k\}$ be fixed.
Let the subpath of $H^*$ between breakpoints $b_i$ and $b_{i+1}$ be re-indexed as $(p_0=b_i,p_1,\ldots,p_{m-1},p_m=b_{i+1})$. Therefore, each angle $\alpha_j$ is the angle of the triangle $\Delta(s,p_{j-1},p_j)$. Let $d(p_{j-1},p_j)=a_j$, and $\beta_j$ be the angle opposite each $d(s,p_{j-1})$. Also note that by the triangle inequality, for each $j$, we have
\[
    d(s,p_{j-1})
    \geq d(s,b_{i+1})-\sum a_j
    =d(s,b_{i+1})-\sqrt{\epsilon} d(s,b_{i+1})
    =(1-\sqrt{\epsilon})d(s,b_{i+1}).
\]
 Then we have
\begin{align*}
    \alpha &= \sum \alpha_j \\
    &\leq \frac{1}{2}\sum\sin\alpha_j  \\
    &= \frac{1}{2}\sum \sin\beta_j\frac{a_j}{d(s,p_{j-i})} &\text{Law of Sines}\\
    &\leq \frac{1}{2}\sum \frac{a_j}{d(s,p_{j-i})} &\sin\beta_i\leq 1 \\
    &\leq  \frac{1}{2}\sum \frac{a_j}{(1-\sqrt{\epsilon})d(s,b_{i+1})} &\text{Lower Bound for $d(s,p_{j-i})$} \\
    &= \frac{1}{2(1-\sqrt{\epsilon})d(s,b_{i+1})}\sum a_i \\
    &= \frac{1}{2(1-\sqrt{\epsilon})d(s,b_{i+1})}\sqrt{\epsilon} d(s,b_{i+1}) \\
    &= \frac{\sqrt{\epsilon}}{2(1-\sqrt{\epsilon})} . \qedhere
\end{align*}
\end{proof}
So, we have shown that the total angle of each $f_i(S_i)$ is less than or equal to $\frac{\sqrt{\epsilon}}{2-2\sqrt{\epsilon}}$, which is less than $\pi$ for sufficiently small $\epsilon>0$. Therefore each $f_i(S_i)$ is injective and convex.

\begin{lemma}
\label{lem:isometry}
    For every $i=1,\ldots, k$, let $d_{S_i}$ be the shortest path distance in the surface $S_i$, and let $d_2$ be the Euclidean distance in the plane. Then $f_i: S_i\to \R^2$ is an isometry between $d_{S_i}$ and $d_2$.
\end{lemma}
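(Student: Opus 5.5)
We show two inequalities, $d_2(f_i(p),f_i(q))\le d_{S_i}(p,q)$ and $d_{S_i}(p,q)\le d_2(f_i(p),f_i(q))$, for all $p,q\in S_i$.

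\emph{Lengths of curves are preserved.} By construction $f_i$ restricted to each cone $C_j$ is an isometry onto the planar sector $f_i(C_j)$. Consecutive images $f_i(C_j)$ and $f_i(C_{j+1})$ agree on the shared ray $f_i(\overrightarrow{sh_{j+1}})$. Hence $f_i$ is continuous on $S_i$. Moreover, it maps every rectifiable curve $\gamma\subset S_i$ to a planar curve $f_i(\gamma)$ of the same Euclidean length: split $\gamma$ into its (at most countably many, after a standard approximation) maximal pieces lying in a single cone and use additivity of length. In particular, if $\gamma$ is a shortest path in $S_i$ between $p$ and $q$, then
\[
d_2(f_i(p),f_i(q))\le \mathrm{len}(f_i(\gamma))=\mathrm{len}(\gamma)=d_{S_i}(p,q).
\]

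\emph{Straight segments lift back to $S_i$.} By \Cref{lem:angle} and the remark following it, the total angle of $f_i(S_i)$ is at most $\sqrt{\epsilon}/(2-2\sqrt{\epsilon})<\pi$. The sectors $f_i(C_j)$ are placed consecutively on opposite sides of their shared rays, so their angles add up without overlap. Therefore $f_i(S_i)$ is a convex planar sector with apex $f_i(s)$, and $f_i$ is a bijection from $S_i$ onto it. The only potential identification is at the apex, which all cones share, and $s$ is mapped to the single point $f_i(s)$.

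Given $p,q\in S_i$, consider the segment $\sigma=f_i(p)f_i(q)$. By convexity, $\sigma$ lies in $f_i(S_i)$. The segment crosses the consecutive sectors $f_i(C_j)$ in a finite number of subsegments, since a segment meets each ray at most once unless it lies on it. Pull each subsegment back by the inverse isometry of the corresponding cone. Consecutive preimages meet at points of the shared ray, because $f_i$ is consistent there. This yields a curve in $S_i$ from $p$ to $q$ of length $|\sigma|$. Hence
\[
d_{S_i}(p,q)\le d_2(f_i(p),f_i(q)).
\]

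\emph{Main obstacle.} The delicate step is making rigorous that $f_i^{-1}$ is well defined and continuous on $f_i(S_i)$, i.e., that the sectors $f_i(C_j)$ have disjoint interiors and tile a convex sector. This requires the angle bound of \Cref{lem:angle}. Degenerate cones also need care: if $s,h_j,h_{j+1}$ are collinear, $C_j$ is a ray or a line and $\alpha_j=0$ or $\pi$. I would handle this case first by noting that a degenerate cone of angle $0$ contributes nothing, and that a cone of angle $\pi$ is excluded by the bound of \Cref{lem:angle} for small $\epsilon$. Once bijectivity is established, combining the two inequalities shows that $f_i$ is an isometry.
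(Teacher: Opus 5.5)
Your proof is correct and follows essentially the same route as the paper. Both arguments obtain $d_2(f_i(p),f_i(q))\le d_{S_i}(p,q)$ by transporting a shortest path of $S_i$ cone by cone into the plane, and obtain the reverse inequality by cutting the straight segment $f_i(p)f_i(q)$ at the boundary rays and pulling the pieces back through the conewise isometries. Your version is also somewhat more careful than the paper's: you state explicitly that convexity of $f_i(S_i)$ and injectivity of $f_i$, which come from the angle bound of \Cref{lem:angle}, are what keep the segment inside the unfolded sector and allow it to be pulled back, and you handle degenerate cones.
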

\begin{proof}
%Finally, we need to show that the distance between two points $x,y\in S_i$ is equal to the distance of their images in $f_i(S_i)$.
We distinguish between two cases: the distances between points that are on the same cone $C_j$ and the distances between points that are on different cones. In the case where two points are on the same cone, the distance is clearly preserved since the function is isometric on each cone. The more challenging case is that points are on different cones. Consider points $x$ and $y$ on different cones, and let $f(z_1),\ldots ,f(z_n)$ be the sequence of points where shortest curve from $f(x)$ to $f(y)$ in $f_i(S_i)$ crosses a ray on the boundary of two consecutive cones. Then we have
\begin{align*}
    d_2(f(x),f(y)) &= d_2(f(x),f(z_1))+d_2(f(z_1),f(z_2))+\ldots+d_2(f(z_n),f(y)) \\
     &= d_{S_i}(x,z_1)+d_{S_i}(z_1,z_2)+\ldots+d_{S_i}(z_n,y) &\text{Isometry within $C_j$} \\
     &\geq d_{S_i}(x,y) . &\text{Triangle Inequality}
\end{align*}

Conversely, let $z'_1,\ldots ,z'_n$ be the sequence of points where the shortest path from $x$ to $y$ in $S_i$ crosses the boundaries between cones. Then, we have
\begin{align*}
    d_{S_i}(x,y) &= d_{S_i}(x,z'_1)+d_{S_i}(z'_1,z'_2)+\ldots +d_{S_i}(z'_n,y) \\
    &= d_2(f(x),f(z'_1))+d_2(f(z'_1),f(z'_2))+\ldots +d_2(f(z'_n),f(y)) &\text{Isometry within $C_i$} \\
    &\geq d_2(f(x),f(y)) . &\text{Triangle Inequality}
\end{align*}

Therefore, we have $d_2(f(x),f(y))\geq d_{S_i}(x,y)$ and $d_{S_i}(x,y)\geq d_2(f(x),f(y))$, which can only be the case when they are equal. %For the $(n+1)$st cone, we simply add the distances $d_2(f(z_n),f(z_{n+1}))$ and $d_{S_i}(z'_n,z'_{n+1})$ respectively; and the final addend starts at $f(z_{n+1})$ and $z'_{n+1}$. Therefore,
We have shown that the functions $f_i$ are distance preserving. Together with what was proved earlier, it is clear that each $f_i$ is an unfolding of $S_i$ into $\R^2$.
\end{proof}

\subsection{Constructing the Steiner SLT}

\subparagraph{Preliminaries.}
We need a few prerequisites for the construction of the SLT. For clarity, let $H_i^*$ be the path in $H^*$ between the breakpoints $b_i$ and $b_{i+1}$, and let $V_i$ be the set of points from $V$ that are included in each $S_i$. We need a second level of breakpoints along $H^*_i$ so that we have a short path to each $v\in f_i(V_i)$, and secondly we need to meet all of the requirements for $T^{\rm core}$.

Fix $i$; the construction will be the same for each $f_i(S_i)$. We will start with a second level of breakpoints between $b_i$ and $b_{i+1}$. By our definition, the length of $H_i^*$ is $\sqrt{\epsilon}d_d(s,b_{i+1})$, so if we place $\vartheta=\lceil\sqrt{1/\epsilon}\rceil$ equidistant breakpoints $b_{i,1},\ldots ,b_{i,\vartheta}$ along the path we will have a distance in $H_i^*$ of $\epsilon d_d(s,b_{i+1})$ between each secondary break point. Denote this set by $B_i=\{b_{i,q}:q\in\{1,\dots,\vartheta\}\}$.

Let $r$ be the point in $f_i(V_i)$ closest to $s$, and let $\ell$ be a line segment through $r$ that is perpendicular to the bisector of the cone $f_i(S_i)$, and terminates at the edges $(s,b_i)$ and $(s,b_{i+1})$. Since we choose $r$ to be the closest point to $s$, all other points are in the closed halfplane determined by $\ell$ that does not include $s$. So for any point on $H^*_i$, the straight line connecting $s$ and $v$ (or $b_{i,q}$) intersects $\ell$ at some point that we denote $\tilde v$ (respectively, $\tilde b_{i,q}$). Furthermore, we will distribute $\lceil\sqrt{1/\epsilon}\rceil$ Steiner points along $\ell$, and let $v'$ (respectively, $b_{i,q}'$) be the Steiner point closest to $\tilde v$ (respectively, $\tilde b_{i,q}$).

\subparagraph{Construction.}
To meet the requirements for \Cref{thm:O1-SLT}, we need a isosceles triangle with points on the base, and an apex angle of $\Theta(\sqrt{\epsilon})$. Let $a$ and $b$ be the points where $\ell$ terminates and then let $\triangle=(s,a,b)$ be the triangle with $s$ as its apex. Therefore, the base of $\triangle$ is $\ell$, which has the necessary points. Earlier, we showed that the angle of the apex of the cone $f_i(S_i)$ is less than or equal to $\frac{\sqrt{\epsilon}}{1-c\sqrt{\epsilon}}$. We can see that for small epsilon this is on the order of $\Theta(\sqrt{\epsilon})$, since $\frac{\sqrt{\epsilon}}{1-c\sqrt{\epsilon}}$ is bounded below by $\sqrt{\epsilon}$ and above by $c'\sqrt{\epsilon}$, where $c'$ is sufficiently large. Therefore, we have met the angle requirement for the core example.

To finalize the construction, we will denote by $T^C_i$ the core example tree rooted at $s$. Let $E_i$ be the graph consisting of each edge $(b'_{i,q},b_{i,q})$. Finally, let $\tilde T_i$ be the union of $T^C_i$, $E_i$, and $H^*_i$. For the entire point set in $\R^d$, let $\hat{G}$ be the graph obtained from the union of all $k$ Steiner trees $f^{-1}_i(\tilde T_i),\ldots ,f^{-1}_k(\tilde T_k)$, where $f^{-1}_i: f_i(S_i)\to S_i$ is the inverse of $f_i$. Finally, let $T$ be an SPT of $\hat G$ rooted at $s$.

\subsection{Analysis}

\subparagraph{Root Stretch Analysis.}
We want to show that the root stretch of $T$ is at most $1+\epsilon$, which we will do by analyzing the component trees $\tilde T_i$. Specifically, we are going to show that the root stretch of each $\tilde T_i$ is $1+O(\epsilon)$, which can be improved to $1+\epsilon$ at the expense of increasing lightness by a constant factor. We will analyze this in three steps, beginning with the simplest case of the path $(s,v',v)$, then the path defined by the edges $(s,b'_{i,q})$, $(b'_{i,q},b_{i,q})$, and $d_{H^*}(b_{i,q},v_j)$, and finally the addition of the core example.

We make a few necessary observations. Note that the distance between any two Steiner points on $\ell$ is $w(\ell)\sqrt{\epsilon}$, and that $w(\ell)=c' d(s,r)$ for some constant $c'$. Therefore, the distance between Steiner points on $\ell$ is $w(\ell)\sqrt{\epsilon} = c'd(s,r)\sqrt{\epsilon} = O(\epsilon)d(s,r)$. This means that $d(\tilde v,v')=O(\epsilon)d(s,r)$ and notice that, by definition, $d(s,r)\leq d(s,v)$ for all $v\in V_i$. Finally, by our construction, for each point $v\in f_i(V_i)$, there is a break point $b_{i,q}\in B_i$ at most $\epsilon d_d(s,b_{i+1})$ away along the path $H_i^*$, which implies that $d_{H^*}(b_{i,q},v_j)\leq \epsilon d(s,b_{i+1})=O(\epsilon)d(s,r)$. We have enough to continue with the path $(s,v',v)$.
\begin{align*}
    w(s,v',v) &= d(s,v')+d(v',v) \\
    &\leq d(s,\tilde v) + d(\tilde v,v') +d(v',\tilde v)+d(\tilde v,v)\\
    &= d(s,v) + 2d(\tilde v,v') & (d(s,\tilde v)+d(\tilde v,v)=d(s,v)) \\
    &= d(s,v) + O(\epsilon)d(s,r) & (d(\tilde v,v')=O(\epsilon)d(s,r)) \\
    &\leq d(s,v) + O(\epsilon)d(s,v) & (d(s,r)\leq d(s,v)) \\
    &= (1+O(\epsilon))d(s,v) .
\end{align*}

This shows that the stretch of path $(s,v',v)$ is $1+O(\epsilon)$. Next we will relate the weights $w(s,b',b)$ and $w(s,v',v)$ for our final analysis
\begin{align*}
    w(s,b',b) &= d(s,b') + d(b',b) \\
    &\leq d(s,v') + d(v',b') + d(b',v') + d(v',v) + d(b,v) \\
    &= w(s,v',v)+2d(b',v')+d(b,v) \\
    &= w(s,v',v)+O(\epsilon)d(s,r) .
\end{align*}

Next, we will use these last two statements to show that $w(s,b',b)\circ d_{H^*}(b,v)$ is also a $1+O(\epsilon)$.
\begin{align*}
    w(s,b',b)\circ d_{H^*}(b,v) &= w(s,b',b) + d_{H^*}(b,v) \\
    &\leq w(s,v',v) + O(\epsilon)d(s,r) \\
    &\leq (1+O(\epsilon))d(s,v)  .
\end{align*}

Finally, in the construction, the SLT follows the core example tree $T^C_i$ instead of a direct line to $b'$. But, it will only grow by a factor of $1+\epsilon$, meaning that the path $d_{T_i^C}(s,b')\circ d(b',b)\circ d_{H^*}(b,v)$ will also be $1+O(\epsilon)$. Finally, we can increase the lightness by a constant factor to reduce the root stretch to $1+\epsilon$.

Now that we have shown it, we need to show that it holds in $\R^d$ as well. We have that for all $v\in f_i(V_i)$,  $d_{T_i}(s,v)\leq(1+\epsilon)d_2(f_i(s),f_i(v))$. Note that the shortest path distance in the surface $S_i$ between $s$ and any $v$ in the point set is a straight line in $\R^d$, since the edge between them is contained in $C_j$. Therefore, the shortest path distance in $S_i$ between $s$ and $v$ is equal to the Euclidean distance in $\R^d$; or equivalently, $d_{f_i(S_i)}(s,v)=d_d(s,v)$.
\begin{align*}
    d_{T_i}(s,v) &\leq (1+\epsilon)d_2(f_i(s),f_i(v)) \\
    &= (1+\epsilon)d_{S_i}(s,v) &\text{Isometry} \\
    &= (1+\epsilon)d_d(s,v) .
\end{align*}
This shows that the root stretch is preserved when each $\tilde T_i$ is isometrically mapped to the plane, meaning the root stretch of the final tree remains $1+\epsilon$.

\subparagraph{Lightness Analysis.}
Still working in $\R^d$, let $S$ be the set of edges $(s,b_i)$ between $s$ and points. Then, let the graph $G=(P,S\cup H^*)$ be obtained by adding all edges in $S$ to the path $H^*$. Finally, let $Q$ be a SPT over $G$, rooted at $s$. This is similar to the ``Phase~1'' SLT from~\cite{Solomon15}. Furthermore, note that $G$ can be split into $k$ trees consisting of the edge $(s,b_i)$ and the path $H^*_i$ without the final edge, which we will denote by $\tau_1,\ldots,\tau_k$. Each of these $\tau_i$ are spanning trees of each point set $V_i\cup\{s\}$.

It is enough to analyze the weight of $Q$, since the weight of the final SLT will be inherited from it. Also, since $Q$ is a SPT over $G$, its weight is bounded above by $w(G)$, so we only need to show the bound for $w(G)$.
By construction, we have $w(G)=w(H^*)+\sum_{i=1}^{k}d(s,b_i)$ and each $d(s,b_i)=\frac{1}{\sqrt{\epsilon}}d_H(b_i,b_{i+1})$. It follows that
\begin{align*}
    w(G) &= w(H^*)+\sum_{i=1}^{k}d(s,b_i) \\
    &= w(H^*)+\frac{1}{\sqrt{\epsilon}}\sum_{i=1}^{k}d_{H^*}(b_i,b_{i+1}) \\
    &\leq w(H^*)+\frac{1}{\sqrt{\epsilon}}\sum_{j=1}^{n}d_{H^*}(v_j,v_{j+1}) \\
    &= w(H^*)+\frac{1}{\sqrt{\epsilon}}w(H^*) \\
    &=\left(1+\frac{1}{\sqrt{\epsilon}}\right)w(H^*) .
\end{align*}

Recall that $w(H^*)\leq2\cdot w(M)=2\cdot w(\MST(P))$. Combining these inequalities, we obtain
\[
    w(G)\leq\left(1+\frac{1}{\sqrt{\epsilon}}\right)w(H^*)\leq\left(1+\frac{1}{\sqrt{\epsilon}}\right)2\cdot w(M)=\left(1+\frac{1}{\sqrt{\epsilon}}\right)2\cdot w(\MST(P)),
\]
which implies that the lightness of $G$ is $O(\epsilon^{-1/2})$. Lastly, the weight of our final tree $\hat G$ is inherited from $G$:
\[
    w(\hat G)
    =\sum w(f^{-1}_i(\tilde T_i))
    =\sum O(w(\tau_i))
    =O\left(\sum w(\tau_i)\right)
    =O(w(G)).
\]

\subparagraph{Conclusion.}
From the root stretch and lightness analysis, we have shown that the final SLT $T$ has root stretch $1+\epsilon$, and that the graph it is a SPT of $\hat G$, has $O(\epsilon^{-1/2})$ lightness. Because $T$ is a SPT of $\hat G$, we know that it has the same lightness, meaning that $T$ is an $(1+\epsilon,O(\sqrt{1/\epsilon}))$-SLT for $V$ rooted at $s$.

\section{2-Dimensional Core Example}
\label{sec:Core}

In this section, we revisit the core example proposed by Solomon~\cite[Section~2.1]{Solomon15}, provide a detailed lightness and root stretch analysis, and make some necessary adjustments to the parameters originally proposed in~\cite{Solomon14,Solomon15}.

    The core example consists of a point set on the boundary of an isosceles triangle $\triangle abs$ with apex $s$ and $\angle asb=\sqrt{\epsilon}$. Specifically, the point set contains $s$ and a finite set of points on the line segment $ab$; see  \Cref{fig:triangle}. Without loss of generality, we can scale the triangle such that $d(s,a)=d(s,b)=1$; hence $d(a,b)=2 \sin(\alpha/2)$ and the height of the triangle is $\cos(\alpha/2)$.
     %finitely many collinear points on s are $s$ and at the origin and a given set $V=\{s =v_0, v_1=a,...,v_n=b\}$, illustrated in \Cref{fig:triangle}, where the apex vertex of $\triangle sab$ is the root $s \in V$, its apex angle is defined as $\alpha := \sqrt{\epsilon}$, and the points $V' = V \setminus \{s\}$ are located along the triangular base. For the triangle's sides, let $d(s,a)=d(s,b)=1$. The base length of $\triangle$ is given by $d(a,b)=2 \sin(\alpha/2)$ and the height is given by $d(s,o)=\cos(\alpha/2)$.

\begin{figure}[h]
        \centering
        \includegraphics[height=7cm]{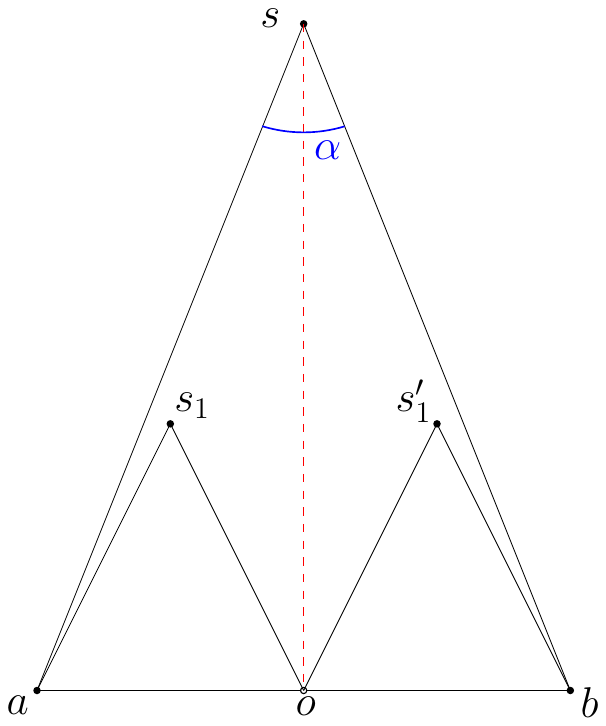}
        \hspace{8mm}
         \includegraphics[height=7cm]{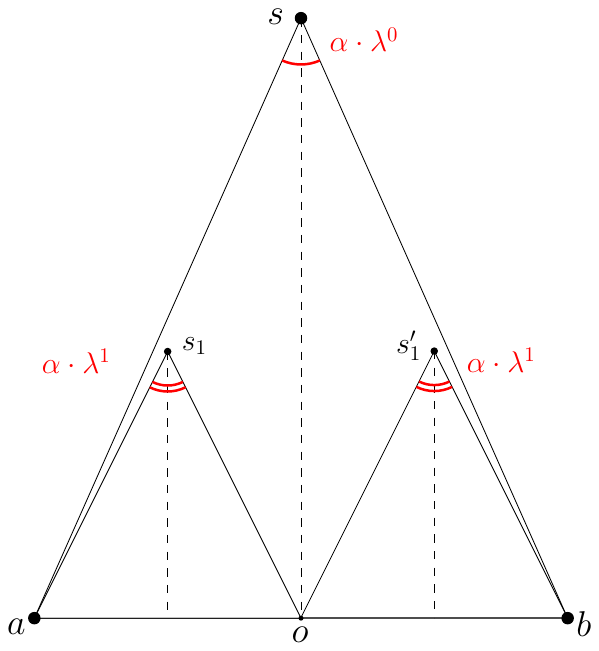}
        \caption{Left: The isosceles triangle $\triangle sab$ with the apex $s$ and apex angle $\alpha$. The base of $\triangle$ contains the points in $V'$ and the Steiner points $\mathcal{P}$.
        Right: $\triangle sab$ with the adjusted apex angles for the first step in the recursion.}
        \label{fig:triangle}
    \end{figure}

    The main result of this section is that for a desirable point set, there exists a Steiner $(1+\epsilon, O(1))$-SLT. We construct an SLT by recursively subdividing the base $ab$, and creating smaller isosceles triangles inside $\triangle asb$. In each iteration, the apex angles increase by a factor of $\lambda>1$. We summarize the result as follows.
   \begin{theorem}\label{thm:O1-SLT}
   Let $\epsilon<\pi/4$, and let $\Delta sab$ be an isosceles triangle with $d(a,s)=d(b,s)$ and $\angle asb=\sqrt{\epsilon}$. Then for a set $V$ of $n+1$ points, $n$ points on the line segment $ab$ and a source $s\in V$, there exists a Steiner $(1+\epsilon, O(1))$-SLT.
    \end{theorem}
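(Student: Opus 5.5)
We normalize $d(s,a)=d(s,b)=1$ and write $\alpha=\sqrt{\epsilon}$. Then the base $ab$ has length $2\sin(\alpha/2)=\Theta(\sqrt{\epsilon})$. The MST of $V$ contains the path along $ab$ through the points of $V'$, so its weight is at least $d(s,V')\ge\cos(\alpha/2)$. Indeed, every spanning tree must connect $s$ to the base, so $w(\MST(V))=\Theta(1)$. It therefore suffices to build a Steiner tree of total weight $O(1)$ in which every $v\in V'$ has tree distance at most $(1+\epsilon)d(s,v)$.

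\textbf{Construction.} The plan is a recursive ``binary fan'' of Steiner points, in the spirit of Solomon's core example.
\begin{enumerate}
\item Place Steiner points $\mathcal P$ on $ab$ at spacing $\Theta(\epsilon)$. There are $\Theta(\sqrt{\epsilon}/\epsilon)=\Theta(1/\sqrt\epsilon)$ of them.
\item Connect each $v\in V'$ along $ab$ to its nearest Steiner point.
\item Connect the Steiner points of $\mathcal P$ to $s$ through a hierarchy of levels $j=0,1,\dots,L$ with $L=\Theta(\log(1/\sqrt{\epsilon}))$.
\end{enumerate}
At level $j$ the base is split into $2^j$ equal subintervals. For each subinterval $I$ we take the isosceles triangle with apex $s_I$ and base $I$, whose apex angle is $\alpha_j=\lambda^j\alpha$ for a fixed constant $\lambda\in(1,2)$; we cap the angle at a constant once $\lambda^j\alpha$ exceeds it. The apex $s_I$ is joined to the apex of its parent interval's triangle, and the root-level apex is $s$ itself.

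\textbf{Lightness.} An interval at level $j$ has length $2^{-j}\Theta(\sqrt\epsilon)$. Its triangle has height $h_j\approx 2^{-j}\sqrt{\epsilon}/\alpha_j=(2/\lambda)^{-j}\cdot\Theta(1)$. The edge from a level-$j$ apex to its level-$(j-1)$ parent has length $O(h_{j-1})$. Summing over all $2^j$ intervals at level $j$, the level contributes $2^j\cdot O((\lambda/2)^{j})=O(\lambda^j)$. The sum over $j\le L$ is dominated by the last level, giving $O(\lambda^L)$.

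So we choose $L$ with $\lambda^L=O(1)$. This conflicts with needing $2^L\approx 1/\sqrt\epsilon$ leaves, and resolving that conflict is the main obstacle: the parameters must be tuned, as the paper remarks. My first attempt is to recompute the heights more carefully. Level $j$ contributes $2^j$ edges each of length about $h_{j-1}-h_j$, and the total edge length is $\sum_j 2^j(h_{j-1}-h_j)$. With $h_j=c\,(\lambda/2)^j$ this is $\Theta(\sum_j \lambda^j)$.

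If this does not balance, the fallback is to use a branching factor larger than 2. Alternatively, let the angle grow by $\lambda=1+\Theta(1/L)$ per level, so that $\lambda^L=O(1)$, while the heights still shrink geometrically by roughly $1/2$. Edge lengths are then dominated by the vertical drops, giving total $\sum_j 2^j\cdot O(h_{j-1})$. Bounding this by $O(1)$ requires $h_j\lesssim 2^{-j}\cdot j^{-2}$ or similar. In any case the heights must satisfy $h_j\ge 2^{-j}\sqrt\epsilon/\alpha_j$ so that each triangle covers its interval, and I will solve for $\lambda$ and the angle schedule from this inequality.

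\textbf{Root stretch.} For a point $v$, its tree path goes $s\to s_{I_1}\to\dots\to s_{I_L}\to p\to v$. The excess over $d(s,v)$ is at most a sum of per-level detours. At level $j$, the direction from $s_{I_{j-1}}$ to $s_{I_j}$ deviates from the direction toward $v$ by an angle $\theta_j=O(\alpha_j)$. The resulting excess is $O(\theta_j^2)\cdot(\text{edge length})$, using $1/\cos\theta-1=O(\theta^2)$.

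Summing gives $\sum_j O(\alpha_j^2 h_{j-1})$. With $\alpha_j=\lambda^j\sqrt\epsilon$ and $h_j$ geometric in $(\lambda/2)^j$, this becomes $\epsilon\sum_j(\lambda^3/2)^j$. This is $O(\epsilon)$ when $\lambda<2^{1/3}$. The final hop $p\to v$ adds $O(\epsilon)$, which is small relative to $d(s,v)\ge\cos(\alpha/2)$.

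Both bounds come out as $1+O(\epsilon)$ stretch and $O(1)$ lightness. Rescaling $\epsilon$ by a constant, which changes $\alpha$ only by a constant factor and is absorbed into the $O(1)$, then gives a $(1+\epsilon,O(1))$-SLT.
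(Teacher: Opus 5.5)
Your construction is essentially the paper's: binary halving of $ab$, apex angles $\alpha\lambda^j$, parent-to-child apex edges, a path along $ab$, and a projection/slack bound for the stretch. The lightness argument, however, has a genuine gap caused by an algebra slip. A level-$j$ triangle has height $h_j\approx 2^{-j}\sqrt{\epsilon}/(\lambda^j\sqrt{\epsilon})=(2\lambda)^{-j}$, not $(2/\lambda)^{-j}=(\lambda/2)^j$. Equivalently, the $2^j$ edges of level $j$ cost $\Theta(2^j\cdot 2^{-(j-1)}\sqrt{\epsilon}/\alpha_{j-1})=\Theta(\sqrt{\epsilon}/\alpha_{j-1})=\Theta(\lambda^{-j})$. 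So the total weight is a convergent geometric series, $O(\lambda/(\lambda-1))=O(1)$ for any fixed $\lambda>1$; this is exactly the paper's estimate $2^{i+1}d(b,s_i)\le 4/\lambda^i$. There is therefore no conflict between $O(1)$ weight and having $2^L\approx\sqrt{1/\epsilon}$ leaves. The geometric growth of the apex angles is precisely what makes the weight converge, since what you need is $\sum_j 1/\alpha_j=O(1/\sqrt{\epsilon})$. As written, though, your proof stops at this spurious obstacle and only lists fallbacks, none of which is carried out. The fallback $\lambda=1+\Theta(1/L)$ would actually fail: then $\alpha_j=\Theta(\sqrt{\epsilon})$ for all $j\le L$, every level costs $\Theta(1)$, and the total is $\Theta(L)=\Theta(\log(1/\epsilon))$.

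The same slip enters your stretch sum, but harmlessly, because $(\lambda/2)^j$ overestimates $h_j$. With the correct heights the excess is $\sum_j O(\alpha_j^2 h_{j-1})=\epsilon\sum_j O((\lambda/2)^j)$, matching the paper's slack bound $\frac{\alpha^2}{4}(\lambda/2)^i$. So any $\lambda<2$ works, and your condition $\lambda<2^{1/3}$ is an artifact. Finally, the closing ``rescale $\epsilon$'' step is not free here. The apex angle is fixed at $\sqrt{\epsilon}$ by hypothesis, and the first edge $s s_1$ alone has slack $\Theta(\epsilon)$ however the lower levels are tuned. Running the construction with $\epsilon/C$ therefore does not by itself turn $1+O(\epsilon)$ into $1+\epsilon$. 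You have to track the constants, as the paper does with $\lambda=5/4$: the slacks total at most $\frac{\epsilon}{2(2-\lambda)}=\frac{2}{3}\epsilon$, the final hop along $ab$ is at most $\epsilon/4$, and dividing by $d(s,v)\ge\cos(\alpha/2)\ge 1-\epsilon/8$ keeps the stretch below $1+\epsilon$ for small $\epsilon$. Alternatively, refine the top of the construction, e.g.\ by splitting the apex cone into $O(1)$ thinner cones.
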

   \subparagraph{Construction.}
    We may assume without loss of generality that the side $ab$ of $\Delta sab$ lies on the $x$-axis such that its midpoint is the origin $o$, and $s$ is on the positive $y$-axis.
    Let $\alpha=\sqrt{\epsilon}$; see \Cref{fig:triangle}. We may further assume, by a suitable scaling, that $d(a,s)=d(b,s)=1$. This implies that $d(o,s)=\cos \frac{\alpha}{2}$ and $d(a,b)=2\sin\frac{\alpha}{2}$. Let $V'=V\setminus \{s\}$, and assume that
    $V'=\{v_1,\ldots , v_n\}$ labeled by increasing $x$-coordinates.

    Let $\lambda\geq 1$ be a parameter to be specified later, and let $k=\lceil \log_2 \sqrt{1/\epsilon}\rceil +1 $.
    We recursively define sets of congruent isosceles triangles $\mathcal{T}_i$ and point set $\mathcal{S}_i$ for $i=0,\ldots , k$, as follows.
    For $i=0$, let $\mathcal{T}_0=\{\Delta asb\}$ and $\mathcal{S}_0=\{s\}$.
    For $i=1,\ldots ,k-1$, proceed as follows.
    Initialize $\mathcal{T}_{i+1}=\emptyset$ and $\mathcal{S}_{i+1}=\emptyset$.
    For each triangle $\Delta a's'b'\in \mathcal{T}_i$, partition the line segment $a'b'$ into two segments of equal length, $a'o'$ and $o'b'$. Let $\Delta a's_{i+1}c'$ and $\Delta c's_{i+1}' b'$ be isosceles triangles with apex angle $\angle a's_{i+1}c' =\angle c's_{i+1}' b'=\alpha\cdot \lambda^i$. We add both new triangles to $\mathcal{T}_{i+1}$ by setting $\mathcal{T}_{i+1}:=\mathcal{T}_{i+1}\cup \{\Delta a's_{i+1}c', \Delta c's_{i+1}' b'\}$, and add their apex vertices to $\mathcal{S}_{i+1}$ by $\mathcal{S}_{i+1}:=\mathcal{S}_{i+1}\cup\{s_{i+1},s_{i+1}'\}$. Finally, let $\mathcal{B}$ be the set of all vertices of the triangles in $\mathcal{T}_k$ in along the line $ab$.

    We can now construct a geometric graph $G^{\text{core}}$ as follows. The vertex set of $G^{\text{core}}$ is $V\cup \mathcal{B}\cup\bigcup_{i=0}^k\mathcal{S}_i$. Consider a simple path connecting all vertices of $G^{\text{core}}$ that lie on the segment $ab$, and add this path to $G^{\rm core}$. For every $i\in \{0,\ldots , k-1\}$, connect the apex $s'$ of every triangle $\Delta a's' b'\in \mathcal{T}_i$ to the apices of  $\Delta a's_ic', \Delta c's_i' b'\in \mathcal{T}_{i+1}$;
    and connect the apex $s'$ of every triangle $\Delta a's' b'\in \mathcal{T}_k$
    to the base vertices $a'$ and $b'$. Finally, let $T^{\text{core}}$ be a shortest path tree (SPT) of $G^{\mathrm{core}}$ rooted at $s$. This completes the construction of $T^{\text{core}}$.

    \subparagraph{Angle Calculations.}
    We first observe that at the end of recursion, the apex angle of the smallest triangles in $\mathcal{T}_1$ is still less than $\pi/2$, and the base of each triangle in $\mathcal{T}_k$ is less than $\epsilon/2$. Assuming $\epsilon < 1$, the final angle of the recursion will be at most
    \begin{equation}\label{eq:angle}
     \sqrt{\epsilon}\cdot \lambda^k
     %\leq \sqrt{\epsilon}\cdot\lambda^{\lceil \log_2 \sqrt{1/\epsilon}\rceil}
      \leq \sqrt{\epsilon}\cdot (\sqrt{1/\epsilon})^{\log_\lambda 2} \cdot\lambda^2
        =\epsilon^{\frac12(1-\log_\lambda 2)} \cdot \lambda^2
        <  \lambda^2
        <\frac{\pi}{2}
    \end{equation}
    if $\lambda <\sqrt{\pi/2}<1.26$. The base of the isosceles triangles at the end of the recursion is bounded by
 \begin{equation*}
      \frac{d(a,b)}{2^k}
        %\frac{\sin\left(\frac{\sqrt{\epsilon}}{2}\right)}{2^{i-1}} =
        \frac{d(a,b)}{2^{\lceil \log_2 \sqrt{1/\epsilon}\rceil+1}}
        <\frac{2\sin\left(\frac{\sqrt{\epsilon}}{2}\right)}{2\cdot 2^{\log_2\sqrt{1/\epsilon}}} %\\
       % = \frac{2\sin\left(\frac{\sqrt{\epsilon}}{2}\right)}{\sqrt{1/\epsilon}} %\\
        < \frac{\frac{\sqrt{\epsilon}}{2}}{\sqrt{1/\epsilon}} %\\
        %= \frac{\sqrt{\epsilon}}{2\sqrt{1/\epsilon}} %\\
        =\frac{\epsilon}{2} .
    \end{equation*}

    \subparagraph{Lightness Analysis.} The lightness of $T^{\text{core}}$ is the ratio $\frac{w(T^{\text{core}})}{w(\MST(V))}$. For an upper bound on lightness, we need an upper bound for the weight of $T^{\text{core}}$ and a lower bound for that of $\mathrm{MST}(V)$. The lower bound for the MST is easy, since $d(s,a)=1$ by construction, and so $d(s,v)\geq \cos\frac{\alpha}{2}$ for every $v\in V'$.
    This implies $w(\MST(V))\geq \cos\frac{\alpha}{2}$.

    The remainder of this section is dedicated to bounding the weight of $T^{\text{core}}$, and specifically finding that it has $O(1)$ lightness. Consider a level of the recursion $\T_i\neq\emptyset$. On each level $i$, $T^\text{core}$ has $2^{i+1}$ edges, each of length $d(s_i,s_{i+1})$. Note that $d(s_i,s_{i+1})\leq d(s_i,b)$, because each $s_{i+1}$ is contained in a triangle with apex $s_i$ and diameter $d(s_i,b)$. We analyze the distance $d(s_i,b)$ in each level of the recursion.
    There are $2^i$ congruent isosceles triangles in the set $\T_i$, and the lengths of their bases are $d(a,b)/2^i=d(o,b)/2^{i-1}=\sin(\frac{\alpha}{2})/2^{i-1}$.
    We can also express the length of each base as $2\sin(\alpha\lambda^i/2)\cdot d(a,s_i)=2\sin(\alpha\lambda^i/2)\cdot d(b,s_i)$. Equating these expressions yields
    \begin{align*}
     2\sin(\alpha\lambda^i/2)\cdot d(b,s_i) &= \sin\left(\frac{\alpha}{2}\right)/2^{i-1} \\
     d(b,s_i) &=\frac{\sin\left(\alpha/2\right)}{2^i\sin(\alpha\lambda^i/2)} .
    \end{align*}
     Taylor estimates $\frac{x}{2}\leq \sin(x)\leq x$ for all $0\leq x\leq \frac{\pi}{2}$ (using \Cref{eq:angle}) give an upper bound
 \begin{equation}
 d(b,s_i)
 =\frac{\sin\left(\alpha/2\right)}{2^i\sin(\alpha\lambda^i/2)}
 \leq \frac{\alpha/2}{2^i\cdot \alpha \lambda^i/4}
 = \frac{1}{2^{i-1}\cdot \lambda^i}.
 \end{equation}
    The binary tree $T^{\text{core}}$ has $2^i$ edges of length $d(b,s_i)$ at level $i$, so their combined length is bounded by
    \begin{equation}
    \label{eq:base-ratio}
        2^{i+1} d(s_{i+1}s_i)
        \leq 2^{i+1}d(b,s_i)
        \leq 2^{i+1}\cdot \frac{1}{2^{i-1}\cdot \lambda^i}
        = \frac{4}{\lambda^i} .
    \end{equation}
    Summation over all levels gives a bound for the total weight of $T^{\text{core}}$:
    \begin{equation}
        w(T^{\text{core}})
        = \sum_{i=0}^{k} 2^{i+1}d(s_{i+1},s_i)
        \leq \sum_{i=0}^{k} 2^{i+1}d(b,s_i)
        \leq \sum_{i=0}^{k}\frac{4}{\lambda^i}
        < \sum_{i=0}^{\infty}\frac{4}{\lambda^i}
        = \frac{4}{1-\frac{1}{\lambda}}
        =\frac{4\lambda}{\lambda-1}.
    \end{equation}
    The weight of $T^{\text{core}}$ is bounded above by a geometric series. This series converges for any $\lambda>1$. Since $w(\mathrm{MST}(V))\leq \cos\frac{\alpha}{2}$, then the lightness of $T^{\text{core}}$
    is bounded by $w(T^{\text{core}})  =\frac{4\lambda}{(\lambda-1)\cos(\alpha/2)} =O(1)$.

    \subparagraph{Root-Stretch Analysis.}
    The root stretch for a vertex $v\in V'$ is the ratio of the distance $d_T(s,v)$ along the tree to the Euclidean distance $d(s,v)$. Every root-to-leaf path in $T^{\rm core}$ includes exactly one edge from each level. Since all edges on level $i$ have the same length, $d(s_i,s_{i+1})$, then $d_T(s,v)=\sum_{i=0}^{k} d(s_i,s_{i+1})$.
    Note that $d(s,v)\geq |\mathrm{proj}_y(sv)| = \cos\frac{\alpha}{2}$, where $\mathrm{proj}_y(uv)$ denoted the orthogonal projection of segment $uv$ to the $y$-axis.
    For every $v\in V'$, the $sv$-path in $T^{\text{core}}$ is $y$-monotone, and so the length of its $y$-projection is exactly $|\mathrm{proj}_y(sv)| = \cos\frac{\alpha}{2}$. For a line segment $uv$, we define
    \[
         \text{slack}(uv)=d(u,v)-|\mathrm{proj}_y(uv)|.
    \]
    Then the length of an $sv$-path in $T^{\text{core}}$ is
    \[
    d_T(s,v)
    =\sum_{i=0}^{k} d(s_is_{i+1})
    =\sum_{i=0}^{k} \big(|\mathrm{proj}_y(s_is_{i+1})|+\mathrm{slack}(s_is_{i+1})\big)
    = \cos\frac{\alpha}{2} +\sum_{i=0}^{k}\mathrm{slack}(s_is_{i+1}) .
    \]

   Furthermore, notice that $d(a,o_i)=\frac{1}{2^{i+1}}d(a,b)=\frac{1}{2^{i+1}}2\sin\frac{\alpha}{2}$, and similarly to the lightness $d(s_i,s_{i+1})\leq d(s_i,a)$. We can bound the slack from above by
    \begin{align*}
        \text{slack}(s_is_{i+1}) &\leq d(s_i, a) - (s_i,o_i) \\
        &= \frac{d(a,o_i)}{\sin\left(\frac{\alpha}{2} \cdot \lambda^i\right)} - \frac{d(a, o_i)}{\tan\left(\frac{\alpha}{2} \cdot \lambda^i \right)} \\
        %&= d(a,o_i) \cdot \frac{1}{\sin\left(\frac{\alpha}{2} \cdot \lambda^i\right)} - d(a,o_i) \cdot \frac{\cos\left(\frac{\alpha}{2} \cdot \lambda^i \right)}{\sin\left(\frac{\alpha}{2} \cdot \lambda^i \right)} \\
        &= d(a,o_i)\cdot \frac{\left( 1 - \cos\left(\frac{\alpha}{2} \cdot \lambda^i \right) \right)}{\sin\left(\frac{\alpha}{2}  \cdot \lambda^i\right)}\\
        &= \frac{\sin\left(\frac{\alpha}{2}\right)}{2^i} \cdot \frac{\left( 1 - \cos\left(\frac{\alpha}{2} \cdot \lambda^i \right) \right)}{\sin\left(\frac{\alpha}{2} \cdot \lambda^i\right)} \\
         &\leq \frac{\alpha/2}{2^i} \cdot \frac{(\alpha\lambda^i/2)^2/2}{(\alpha\lambda^i/2)/2} = \frac{\alpha^2}{4}\cdot \left(\frac{\lambda}{2}\right)^i,
    \end{align*}
where the last line used Taylor estimates $\frac{x}{2}\leq \sin(x)\leq x$ and  $\cos(x)\geq 1-\frac{x^2}{2}$.
%    \begin{equation*}
%        \frac{\sin\left(\frac{\alpha}{2}\right)}{2^i} \cdot \frac{\left( 1 - \cos\left(\frac{\alpha}{2} \cdot \lambda^i \right) \right)}{\sin\left(\frac{\alpha}{2} \cdot \lambda^i\right)}
%       \leq \frac{\alpha/2}{2^i} \cdot \frac{(\alpha\lambda^i/2)^2/2}{(\alpha\lambda^i/2)/2}
%= \frac{\alpha^2}{4}\cdot \left(\frac{\lambda}{2}\right)^i
%  \end{equation*}
%   Altogether, we have that the slack for each iteration of the recursion is bounded above by $\frac{\alpha^2}{4}\cdot \left(\frac{\lambda}{2}\right)^i$:
Summation over all levels yields
    \begin{equation}
    \label{eq:root-stretch1}
        d_T(s,v)
    = \cos\frac{\alpha}{2} +\sum_{i=0}^{k}\mathrm{slack}(s_is_{i+1})
    \leq \cos\frac{\alpha}{2} +\frac{\alpha^2}{4}\cdot\sum_{i=0}^{\infty} \left(\frac{\lambda}{2}\right)^i
    \leq \cos\frac{\alpha}{2} + \frac{\alpha^2}{2(2-\lambda)}
    \end{equation}
for all $1<\lambda<2$. We have all the information needed to estimate the root stretch, so using the Taylor estimate $d(s,v)\geq \cos(\alpha/2)\geq 1-\alpha^2/8$ we find for any point in $\mathcal{B}$ we get a stretch of:
    \[
    \frac{d_T(s,v)}{d(s,v)}
    \leq \frac{\cos(\alpha/2) +\frac{\alpha^2}{2(2-\lambda)}}{\cos (\alpha/2)}
    \leq 1+\frac{\alpha^2}{2(2-\lambda)(1-\alpha^2/8)}.
    \]

    We can now substitute $\alpha^2=\epsilon$ and fix $\lambda=5/4$, and additionally observe that the largest distance between any point in $\mathcal{B}$ an arbitrary point on $ab$ is $\epsilon/4$. We can now show that we obtain the desired root stretch for every point $v\in V'$:
    \begin{equation}
    \label{eq:root-stretch2}
    \frac{d_T(s,v)}{d(s,v)}
    \leq  \frac{d_T(s,v)}{d(s,v)}
    \leq \frac{\cos(\alpha/2) +\frac{\epsilon}{2(2-\lambda)}+\frac{\epsilon}{4}}{\cos (\alpha/2)}
    \leq 1+\frac{\frac{2}{3}+\frac{1}{4}}{1-\epsilon/8} \cdot \epsilon
     <1+\epsilon.
    \end{equation}
This completes the proof of \Cref{thm:O1-SLT}.

\subparagraph{Remark.} Even though the lightness analysis works for any $1<\lambda<2$, the root stretch analysis would not work for parameter $\lambda=\frac32$, which was proposed by Solomon~\cite{Solomon14,Solomon15}. Specifically, \Cref{eq:root-stretch2} shows that $\lambda=\frac32$ would not guarantee root-stretch $1+\epsilon$ even for the base points in $\mathcal{B}$. The range $1<\lambda<\frac54$ was also needed for the angle constraints in \Cref{eq:angle}, which was necessary for our Taylor estimates. It is possible, though, that the construction with $\lambda=\frac32$ still produces a Steiner $(1+\epsilon,O(1))$-SLTs, however, it would require a more careful analysis.

\section{Core Example in $d$-Dimensions}
\label{sec:pyramid}

    In this section, we extend the above construction to $d$-dimensions, using right pyramids of cube bases. We will find that there is no change in root stretch when moving to the $d$-dimensional version, but the lightness depends exponentially on $d$. A \emph{right pyramid} is the convex hull of a $(d-1$-dimensional hypercube $Q$ and an apex $s$ such that the orthogonal projection of $s$ to the hyperplane spanned by $Q$ is the midpoint of $Q$. Observe that the triangle $\Delta asc$, formed by $s$ and a body diagonal $ac$ of $Q$, is isosceles. We define the \emph{apex angle} of the pyramid as $\angle asc$. Note that for a given $(d-1)$-dimensional cube $Q$, the apex angle $\angle asc$ determines the location of the apex $s$ (up to reflection in the hyperplane spanned by $Q$). In the construction below, we use $(d-1)$-dimensional hypercubes $Q$ in a hyperplane orthogonal to the $x_1$-axis and apex angles to define pyramids.

    \begin{theorem}\label{thm:O2-SLT}
   Let $Q$ be the base cube and $s$ be the apex of a right pyramid of apex angle $\sqrt{\epsilon}$ with base $Q$ in $\mathbb{R}^d$. Let $V$ be a set $V$ of $n+1$ points, $n$ of which are in a $n^{1/(d-1)}\times\ldots \times n^{1/(d-1)}$ grid in $Q$ and a source point $s$. For all
   $n\geq (2\sqrt{d}\cdot \epsilon^{0.66-d/2})^{(d-1)/(d-2)}$,
   %$n\geq \epsilon^{-1.68}$,
   there exists a Steiner $(1+\epsilon,O(1))$-SLT.
    \end{theorem}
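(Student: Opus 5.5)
We would mirror the proof of \Cref{thm:O1-SLT}, replacing binary subdivision of the base segment by $2^{d-1}$-ary subdivision of the base cube. Place $Q$ in the hyperplane $x_1=0$ centered at the origin $o$, with $s$ on the positive $x_1$-axis. Scale so that the lateral edges $sa$ to the cube vertices have length $1$. Then $d(s,o)=\cos(\alpha/2)$ and the body diagonal of $Q$ has length $2\sin(\alpha/2)$, where $\alpha=\sqrt{\epsilon}$.

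\textbf{Construction.} Set $\T_0=\{(Q,s)\}$. For $i\geq 0$, split each cube of level $i$ into $2^{d-1}$ congruent subcubes of half the side. Over each subcube erect a right pyramid with apex angle $\alpha\lambda^{i+1}$, and join the parent apex to the $2^{d-1}$ child apices. After $k=\lceil\log_2\sqrt{1/\epsilon}\rceil+1$ levels, connect each leaf apex to the corners of its subcube. Take $\mathcal{B}$ to be these corners. Connect each grid point of $V$ to a nearest point of $\mathcal{B}$; alternatively, use a path through the grid and $\mathcal{B}$ inside $Q$, as in the planar case. Let $T$ be an SPT of the resulting graph.

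\textbf{Angles and root stretch.} The angle estimate \eqref{eq:angle} carries over verbatim with $\lambda=5/4$, so all apex angles stay below $\pi/2$ and the Taylor estimates remain valid. The final subcube diagonals have length $2\sin(\alpha/2)/2^k<\epsilon/2$. Every root-to-leaf path is $x_1$-monotone, and its projection onto the $x_1$-axis has length $\cos(\alpha/2)$. For the slack, replace the half-base $d(a,o_i)$ by the half-diagonal $\sin(\alpha/2)/2^i$ of a level-$i$ cube, and replace $d(s_i,a)$ by the lateral edge. Then the slack computation is identical to the planar one and gives $\mathrm{slack}\leq\frac{\alpha^2}{4}(\lambda/2)^i$. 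Hence the bound \eqref{eq:root-stretch2} holds, the root stretch is $1+\epsilon$, and it does not depend on $d$. The additional distance from a grid point to its nearest corner in $\mathcal{B}$ is at most half a final diagonal, and is absorbed exactly as the $\epsilon/4$ term is in \eqref{eq:root-stretch2}.

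\textbf{Weight of the tree.} Level $i$ contributes $2^{(d-1)(i+1)}$ edges, each of length at most $d(s_i,a)\leq 1/(2^{i-1}\lambda^i)$. The total is therefore about $\sum_i 2^{(d-2)i}\lambda^{-i}\cdot 2^{d}$. This is dominated by the last level and equals roughly $2^d\,\epsilon^{-(d-2)/2}$, up to a factor of $\epsilon^{\log_2\lambda/2}$. The edges from leaf apices to the corners add a comparable amount. Finally, connecting the $n$ grid points to $\mathcal{B}$ costs at most $n$ times a final half-diagonal, which is $O(n\,\epsilon)$ after scaling. Alternatively, the grid path inside $Q$ can be bounded by $O(\sqrt{d}\,n^{(d-2)/(d-1)}\sin(\alpha/2))$.

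\textbf{Lower bound on the MST (main obstacle).} For $d=2$ the MST weight is at least $\cos(\alpha/2)$, but here that bound is far too weak. We must use that the MST of the $n$ grid points has weight at least $(n-1)$ times the grid spacing. The spacing is $\frac{2\sin(\alpha/2)}{\sqrt{d-1}}\,n^{-1/(d-1)}$, so the MST weight is at least of order $\frac{\sqrt{\epsilon}}{\sqrt d}\, n^{(d-2)/(d-1)}$.

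The main difficulty is balancing the exponentially growing tree weight $2^{O(d)}\epsilon^{-(d-2)/2+\log_2\lambda/2}$ against this MST lower bound. The hypothesis $n\geq(2\sqrt d\,\epsilon^{0.66-d/2})^{(d-1)/(d-2)}$ is exactly what makes $n^{(d-2)/(d-1)}\sqrt{\epsilon}/\sqrt d$ dominate the tree weight. The exponent $0.66$ should come from $\frac12\log_2(5/4)\approx 0.16$ together with the factor $\sqrt\epsilon$. With that choice the lightness is $O(1)$, though the hidden constant carries the $2^d$ factor, which confirms the dimension dependence claimed for this construction.

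I would carry out this final bookkeeping carefully, comparing each level's contribution separately against the MST bound. The linking cost $O(n\epsilon)$ could exceed the MST bound in high dimension. If it does, I would instead link the grid points by the grid path, whose weight is $O(\mathrm{MST})$.
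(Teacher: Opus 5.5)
You have the angles, the slack bound, the tree weight, the MST lower bound and the exponent bookkeeping ($0.66=\frac12+\frac12\log_2\frac54$) as the paper does. The gap is in attaching the grid points $V'$ to $\mathcal{B}$: once the base $Q$ has dimension $d-1\ge 2$, neither of your two options is both light and shallow.

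\textbf{Star option.} Joining each grid point to a nearest corner of its final cell fails for every $d\ge 3$, not just ``in high dimension''. Let $L=\Theta(\epsilon/\sqrt{d-1})$ be the side of a final cell and $\delta$ the grid spacing. A constant fraction of the grid points lie at distance $\Omega(L)$ from every corner, while $w(\MST(V))\approx n\delta$. These star edges are forced into the SPT, because grid points are leaves. Their total weight is therefore $\Omega(L/\delta)=\Omega(\sqrt{\epsilon}\,n^{1/(d-1)})$ times the MST. This ratio is unbounded as $n\to\infty$. Even at the threshold $n_0=(2\sqrt d\,\epsilon^{0.66-d/2})^{(d-1)/(d-2)}$ it is at least of order $\epsilon^{-0.34/(d-2)}$.

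\textbf{Path option.} A single path through $V'\cup\mathcal{B}$ is light, but it breaks the root stretch. In the plane, a path through collinear points realizes Euclidean distances; in $\mathbb{R}^{d-1}$ it does not. Deleting the $|\mathcal{B}|\le(5/\sqrt{\epsilon})^{d-1}$ corners splits the path into runs, and one run contains at least $n/(|\mathcal{B}|+1)$ grid points. Its middle point is at path distance $\Omega_d(\epsilon^{d/2}n^{(d-2)/(d-1)})$ from $\mathcal{B}$. This is already $\Omega_d(\epsilon^{0.66})\gg\epsilon$ at $n=n_0$, and it grows with $n$. Every $sv$-path must enter the base through $\mathcal{B}$, so this point has stretch well above $1+\epsilon$.

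\textbf{What is missing.} You need a structure in the hyperplane of $Q$ that is simultaneously light and a spanner. The paper adds a $\frac54$-spanner on $V\cup\mathcal{B}$. By known light-spanner bounds its weight is $O(w(\MST))$, with a constant depending on $d$ that is absorbed like the $2^d$ factor. It contains a path of length at most $\frac54\cdot\frac{\epsilon}{4}=\frac{5}{16}\epsilon$ from each $v\in V'$ to a nearest $u\in\mathcal{B}$. The slack bound of \eqref{eq:root-stretch1} at $\lambda=\frac54$ is $\frac23\epsilon$, and $\frac23+\frac{5}{16}<1$, so the stretch estimate \eqref{eq:root-stretch2} still goes through.
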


    \subparagraph{Construction.}
    Let $\alpha=\sqrt{\epsilon}$. We may assume that the right pyramid $P$ with base $Q$ is centered at the origin $o$ and, by a suitable scaling, that $s$ is at unit distance from all corners of $Q$. This implies that $d(o,s)=\cos \frac{\alpha}{2}$ and $d(a,c)=2\sin\frac{\alpha}{2}$, where $ac$ is a body diagonal of $Q$.
    Let $V'=V\setminus \{s\}$, and  $V'=\{v_1,\ldots , v_n\}$ labeled by lexicographic ordering (according to the $x_1,\ldots x_{d-1}$-coordinates).

    Let $\lambda\geq 1$ be a parameter to be specified later and let $k=\lceil \log_2 \sqrt{1/\epsilon}\rceil +1 $.
    We recursively define sets of congruent pyramids $\mathcal{P}_i$ and point set $\mathcal{S}_i$ for $i=0,\ldots , k-1$, as follows.
    For $i=0$, let $\mathcal{P}_0=\{P\}$ and $\mathcal{S}_0=\{s\}$.
   % While $\alpha\cdot \lambda^{i+1}<\frac{\pi}{2}$, proceed as follows:
    For $i=0,\ldots ,\lceil \log_2 \sqrt{1/\epsilon}\rceil$, proceed as follows.
    Initialize $\mathcal{P}_{i+1}=\emptyset$ and $\mathcal{S}_{i+1}=\emptyset$.
    For each pyramid $P\in \mathcal{P}_i$, subdivide the base cube into $2^{d-1}$  congruent cubes, and let their respective pyramids have apex angle $\alpha\cdot \lambda^i$. We add all new pyramids to $\mathcal{P}_{i+1}$, and add their apex vertices to $\mathcal{S}_{i+1}$. Let $\mathcal{B}$ be the set of vertices of the pyramids in $\mathcal{P}_k$ the lie in $Q$.

    We can now construct a geometric graph $G^{d\text{-core}}$ as follows. The vertex set of $G^{d\text{-core}}$ is $V\cup \mathcal{B}\cup\bigcup_{i=0}^k\mathcal{S}_i$.
    Consider a $\frac54$-spanner for the point set $V\cup \mathcal{B}$ in the hyperplane spanned by $Q$, and add it to $G^{d\text{-core}}$. For every $i\in \{0,\ldots , k-1\}$, connect the apex $s'$ of every pyramid $P\in \mathcal{P}_i$ to the apexes of the four pyramids in $\mathcal{P}_{i+1}$ contained in $P$;
    and connect the apex $s'$ of every pyramid $P\in \mathcal{P}_k$
    to the $2^{d-1}$ base vertices of $P$.  Finally, let $T^{d\text{-core}}$ be a shortest path tree (SPT) of $G^{d\text{-core}}$ rooted at $s$. This completes the construction of $T^{d\text{-core}}$.

    \subparagraph{Root-Stretch Analysis.}
    As a reminder, the root stretch for a vertex $v\in V'$ is the ratio of the distance $d_T(s,v)$ along the tree to the Euclidean distance $d(s,v)$. Every root-to-leaf path in $T^{d\text{-core}}$ includes exactly one edge from each level. Since all edges on level $i$ have the same length, $d(s_i,s_{i+1})$, then $d_T(s,v)=\sum_{i=0}^{k} d(s_i,s_{i+1})$.

    Consider the diagonal cross section of $T^{\text{d-core}}$ between $s$ and two opposite corners of $Q$, without loss of generality say $a,c$. $\triangle asc$ has apex angle $\alpha$, each level of the recursion is $2^i$ triangles with angle $\alpha\cdot\lambda^i$, and finally $d(a,c)=2\sin\frac{\alpha}{2}$. It should be clear that this cross section is exactly $T^\text{core}$ from \cref{sec:Core}. As we said above, all edges on the level $i$ have the same length, which means that any root-to-leaf path in $T^{d\text{-core}}$ has the same length as a root-to-leaf path in the diagonal cross section, which is equivalent to $T^{\rm core}$. Consequently, \Cref{eq:root-stretch1} still holds for the path in $T^{d\text{-core}}$ from $s$ to any  point in $\mathcal{B}$.

   Consider an arbitrary point $v\in V'$: it is contained in a pyramid $P\in \mathcal{P}_k$, so there is a point $u\in \mathcal{B}$ at distance at most $\frac12\cdot d(a,b)/2^k \leq \sin(\frac{\sqrt{\epsilon}}{2})/(2\cdot \sqrt{1/\epsilon})
  <\frac{\sqrt{\epsilon}}{2}/(2\sqrt{1/\epsilon})<\epsilon/4$. The $\frac54$-spanner on $V\cup\mathcal{B}$ contains an $uv$-path of length at most $\frac54(\epsilon/4)=\frac{5}{16}\cdot \epsilon$. So, for every point in $v\in V'$, the graph $G^{d\text{-core}}$ contains an $sv$-path of length at most $\cos(\alpha/2)+\frac{3\epsilon}{4}+5\epsilon/16 =\cos(\alpha/2)+\frac{17}{16}\cdot \epsilon$. Now \Cref{eq:root-stretch2} holds, as well, establishing that the root stretch of $T^{d\text{-core}}$ is at most $1+\epsilon$ for all $v\in V'$.

    \subparagraph{Lightness Analysis.}
    Remember that the lightness of $T^{d\text{-core}}$ is the ratio $\frac{w(T^{d\text{-core}})}{w(\MST(V))}$.
    Recall that $V'\subset V$ is a set of $n$ points in a $n^{1/(d-1)}\times\ldots \times n^{1/(d-1)}$ grid in a $(d-1)$-dimensional hyercube $Q$ of diagonal $d(a,c)=2\sin\frac{\sqrt{\epsilon}}{2}$ and side length $d(a,b)=\frac{2}{\sqrt{d}}\sin\frac{\sqrt{\epsilon}}{2}>\frac{2}{\sqrt{d}}\cdot \frac{\sqrt{\epsilon}}{4}= \frac{\sqrt{\epsilon/d}}{2}$ if $\sqrt{\epsilon}<\pi$. The distance between any two adjacent grid points is $d(a,b)/n^{1/(d-1)}> \sqrt{\epsilon/d}/(2n^{1/(d-1)})$; this is a lower bound on any edge of $\MST(V)$. Consequently, $w(\MST(V))> n\cdot \sqrt{\epsilon/d}/(2n^{1/(d-1)}) = \sqrt{\epsilon/4d}\cdot n^{(d-2)/(d-1)}> \Omega(\epsilon^{1.16-d/2})$ for the range of $n$ specified in the theorem.

    In the lightness analysis in \Cref{sec:Core}, there were $2^i$ congruent triangles in each set $\mathcal{T}_i$, but for this construction we have $2^{(d-1)i}$ congruent pyramids in each $\mathcal{P}_i$. As argued above in the root stretch analysis, we know that the length of $d(b,s_i)$ in $T^{d\text{-core}}$ is the same as in $T^{\rm core}$ for all $i$, therefore much of the earlier lightness analysis holds, except that we have $2^{(d-1)}$ edges in each level of the recursion. So we will start from \Cref{eq:base-ratio}, at each recursion we have
    \begin{equation}
    \label{eq:base-ratio-3D}
        2^{(d-1)(i+1)} d(s_{i+1}s_i)
        \leq 2^{(d-1)(i+1)}d(b,s_i)
        \leq 2^{(d-1)(i+1)}\cdot \frac{1}{2^{i-1}\cdot \lambda^i}
        = 2^d\cdot \frac{2^{(d-2)i}}{\lambda^i} .
    \end{equation}
    Summation over all levels gives
    \begin{align*}
    \sum_{i=0}^{k} 2^{(d-1)(i+1)}d(s_{i+1},s_i)
        %& \leq \sum_{i=0}^{k} 4^{i+1}d(b,s_i) \\
        & \leq 2^d\sum_{i=0}^{k}\frac{2^{(d-2)i}}{\lambda^i} \\
        &\leq O\left(2^d \cdot \left(\frac{2^{d-2}}{\lambda}\right)^k\right) \\
        &= O\left(2^d \cdot\left(\frac{2^{d-2}}{\lambda}\right)^{\log_2 \sqrt{1/\epsilon}}\right)\\
        %&= O\left(\left(\frac{2}{\lambda}\right)^{\log_{2/\lambda} \sqrt{1/\epsilon}/\log_{2/\lambda}2}\right)\\
        &= O\left(2^d \cdot \left(\frac{1}{\epsilon}\right)^{1/(2\log_{2^{d-2}/\lambda}2)}\right)\\
        &= O\left(2^d \cdot \left(\frac{1}{\epsilon}\right)^{\frac12 \log_2(2^{d-2}/\lambda))}\right)\\
        &= O\left(2^d \cdot \left(\frac{1}{\epsilon}\right)^{\frac12 (d-2-\log_2\lambda)}\right)\\
        &< O\left(2^d \cdot \left(\frac{1}{\epsilon}\right)^{d/2-1.16}\right)
    \end{align*}
for $\lambda=\frac54$. The weight of a $\frac{5}{4}$-spanner for $V'$ in the plane is known to be $O(\MST(V))$~\cite{DasHN93}; see also \cite{20M1317906}.
Overall, the weight of $G^{\text{d-core}}$ is $O(\MST(V)+2^d/\epsilon^{d/2-1.16})$.
Together with the earlier calculated weight of the $\MST(V)$, we have that the lightness of $T^{\text{d-core}}$ is bounded by
\begin{equation*}
    \frac{w(T^{\text{d-core}})}{w(\MST(V))}
    \leq \frac{O(\MST(V)+2^d/\epsilon^{d/2-1.16})}{w(\MST(V))}
    \leq 1+ 2^d\cdot \frac{O\left(\epsilon^{1.16-d/2}\right)}{\Omega(\epsilon^{1.16-d/2})} =O(1) .
\end{equation*}

%\bibliographystyle{plainurl}
%\bibliography{ref}

\end{document}